\newtheorem{theorem}{Theorem}[section]
\newtheorem{corollary}{Corollary}[theorem]
\newtheorem{definition}[theorem]{Definition}
\definecolor{vlightgray}{cmyk}{0,0,0,0.125}
\title{Cutting Medusa's Path \\
    \large Tackling Kill-Chains with \\ Quantum Computing}
\author[1]{Mark Carney}
\affil[1]{Quantum Village Inc.}
\date{November 2022}
\begin{document}
\twocolumn[
\begin{@twocolumnfalse}
\maketitle

\begin{abstract}
This paper embarks upon exploration of quantum vulnerability analysis. By introducing vulnerability graphs, related to attack graphs, this paper provides background theory and a subsequent method for solving significant cybersecurity problems with quantum computing. The example given is to prioritize patches by expressing the connectivity of various vulnerabilities on a network with a QUBO and then solving this with quantum annealing. Such a solution is then proved to remove all kill-chains (paths to security compromise) on a network. The results demonstrate that the quantum computer's solve time is almost constant compared to the exponential increase in classical solve time for vulnerability graphs of expected real world density. As such, this paper presents a novel example of advantageous quantum vulnerability analysis.
\end{abstract}
\end{@twocolumnfalse}]

\section{Introduction}

Much of the defensive work carried out by organisations in the cybersecurity space has become a data problem \cite{scott_2021}. Whilst this presents several challenges, it also presents numerous opportunities for moving the operating mode of any Security Operations Centre (SOC) from mostly reactive to including predictive activities \cite{Chio2018}.

Quantum computing has heralded several major claims for the future of computing, particularly surrounding the significant optimisations (theoretically) to be found in quantum computing on specific problem classes, including number factoring, parameter discovery, travelling salesman problems, and many more \cite{Nielsen2010}.

This paper presents a method for utilising such quantum advantage in a new approach to the theoretical underpinnings of network-based cybersecurity. By finding a new way of solving some of the data problems in cyber-defence, and improving the response times thereby, the aim is to demonstrate a more efficient move towards prioritising and applying vulnerability patches.

Patch management is a common pain point for any large scaled enterprises \cite{NIST2022} or widely distributed systems such as smartphones or IoT devices \cite{Farhang2019}. Indeed, lack of appropriate patching was indicated as being a central cause for some high profile cybersecurity breaches, such as the infamous Equifax hack \cite{newman2017}. A variety of approaches have been proposed to improve the categorization and management of patches, including deep learning technologies \cite{Wang2021}.

This paper demonstrates this by analysing vulnerability data on hosts as a bipartite graph. With this, we reason that attacks are made up of `kill-chains', which are themselves comprised of sequences of exploits leveraging vulnerabilities that are coincident - in our model, by being shared on a host. By then creating a process through which we can totally disconnect vulnerabilities from one another, we effectively remove every possible kill-chain. This problem, however, involves a known \emph{NP}-hard problem, but one which is tractable on a DWave quantum computer (see section \ref{sec:GA}). The full results of various tests of this method can be found in section \ref{sec:results}, and the code may be found in \cite{CarneyGH2022}.

By leveraging quantum computation and optimisation methods for vulnerability analysis of this kind, new avenues of optimisation of cybersecurity and related data can be considered. The example presented here is fully worked through to demonstrate where a possible quantum advantage may lie for this kind of analysis on this type of data.

The leveraging of short-time solutions to \emph{NP}-hard problems that are present in cybersecurity data is a potentially rich vein of exciting possibilities. The fast and efficient resolution of cybersecurity data problems also helps reduce the analysis and reaction times of security teams, thereby tightening their OODA loop and increasing their security level proportionately \cite{Schwartau2018}.

\section{Prerequisites}

\subsection{Graph Theoretic Pre-requisites}

\begin{definition}{\cite{Diestel2010}}
A \emph{graph} $G = (V,E)$; $V(G)$ a set of vertices and a set of edges $E(G) \subseteq V \times V$, $E$ is composed of pairs of elements from $V$.
\end{definition}

All of the graphs we will consider here are finite simple graphs; the number of edges and vertices is finite, and the graph has only one unique edge between any two vertices, with no edges starting and terminating on the same vertex (simple).

\begin{definition}{\cite{Diestel2010}}
A \emph{path} $P$ is a sequence of edges $e_0, e_1, \ldots, e_n \in E$ such that for every $e_{i-1}, e_i, e_{i+1} \in P$, $\exists v_i \in (e_{i-1} \cap e_i) \text{ and } \exists v_{i+1} \in (e_{i} \cap e_{i+1})$. There are two distinguished vertices, one at the start and one at the end, and a path $P$ is a \emph{loop} iff the starting and terminating edges in the path overlap.
\end{definition}

\begin{definition}{\cite{Diestel2010}}
    A \emph{weighted} graph $G$ is a graph with edges that are triples $(u,v,w)$ such that $u,v \in V$ and weight $w \in \mathbb{N}$. If $w$ is not specified, it is assumed to be 1.
\end{definition}

The use of the weight parameter $w$ is such that we can talk about the maximum/minimum `flow' across different paths along a graph.

We shall be using a bipartite graph structure, defined as follows:

\begin{definition}{\cite{Diestel2010}}
    A bipartite graph $G$ is a graph with a partition of $V(G)$ into two sets $A, B$ such that $\forall (a,b) \in E(G)$, $a \in A$ and $b \in B$.
\end{definition}

\subsection{Graph Algorithms}\label{sec:GA}

Various algorithms have been found that provide graph structured information with significant insight into the underlying relationships between different elements in a dataset. The graph algorithm utilised in this paper concerns finding the \emph{minimum vertex cover}. 

\begin{definition}[MVC, \cite{Karp1972}]
    A \emph{vertex cover} for a graph $G$ is a subset $C \subset V(G)$ such that for every edge $(u,v) \in E(G)$ at least one of $u$ or $v$ is in $C$. $C$ is \emph{minimal} if $V(G) \setminus C$ is as large as possible.
\end{definition}

The idea behind a vertex cover is that it is a set of vertices that `touches' every edge in the graph. That is for any vertex cover $C$ for a graph $G$, $V(G) \setminus C \implies E(G) = \emptyset $. An important feature of vertex covers is that finding one is a known \emph{NP}-hard problem.\cite{Karp1972}

Let an \emph{independent set} $\mathcal{I}$ be a set of vertices on a graph such that no two members of $\mathcal{I}$ are connected. An alternative definition that we use in the code in \cite{CarneyGH2022} is that a minimum vector cover of some graph $G$ is the compliment of a maximal independent set $\mathcal{I} \subseteq V(G)$.

\subsection{Quantum Annealing and QUBOs for Graph Algorithms}

There are many problems that have been found to be easily coded into Quadratic Unconstrained Binary Optimisation (QUBO) problems\footnote{Also called `Unconstrained Binary Quadratic Programs', or UBQPs. See \cite{Kochenberger2014}.}\cite{Lucas2014}. Indeed, DeSimone \emph{et al.} \cite{DeSimone1995} showed how Ising Hamiltonians can be equivalent to graph problems, an idea that we make use of here.

\begin{definition}{\cite{DeSimone1995}}\label{defQUBO}
    A \emph{Quadratic Unconstrained Binary Optimisation} (QUBO) is the problem of finding, for $x \in \mathbb{B}^n$ the following $$ \min x^\top Q x$$ for a given upper triangular matrix $Q \in \mathbb{R}^{n \times n}$. 
\end{definition}

DeSimone \emph{et al.} \cite{DeSimone1995} showed that QUBO problems can be represented in the form $$ f_Q(x) = \sum_{i=1}^n \sum_{j=1}^i q_{ij} x_i x_j $$ for coefficients $q_{ij} \in \mathbb{R}$, $x_i, x_j \in \mathbb{B}^n$. This is equivalent to extracting the coefficients from $Q$ in definition \ref{defQUBO}. This is equivalent to Ising Hamiltonians of the form $$ H(\sigma) = -\sum_{\langle i,j \rangle} J_{ij} \sigma_i \sigma_j - \mu \sum_{j} \sigma_0 \sigma_j $$ for real parameters $J_{ij}$, $\sigma_i$, and $\mu$. 

With this equivalence, there is a way to use the adiabatic quantum computing methods of DWave Systems to acquire the vector values for $x$ above such that it can solve the minimum vector cover problem we described in section \ref{sec:GA} \cite{Pelofske2019}.

\section{New Approaches to Vulnerability Patch Prioritisation}

This section describes a graph theoretic way of analysing vulnerabilities on a computer network called `vulnerability graphs', derived from the notion of `attack graphs' \cite{Matthews2020}. The aim of this section is to present a new way to prioritise the patching of vulnerabilities by considering their connectivity and solving for this using quantum computation.

Our aim here is to provide a way of identifying the `most well connected' issues by means of describing the dual on our vulnerability graphs. 

\subsection{Attack Graphs and Kill-chains}

`Attack graphs' have featured in some interesting approaches to managing and mitigating security threats (\cite{FX2016}, \cite{Callahan2019}, \cite{Matthews2020}). They provide various ways regarding how to analyse network-oriented vulnerability data that many cybersecurity information sources generate. As pointed out in Fran\c{c}ois-Xavier \emph{et al.} \cite{FX2016}, representations of a Bayesian Attack Model can take the form of network graphs, from which Matthews \emph{et al.} \cite{Matthews2020} show that various network characteristics can be extracted. Callahan \emph{eg al.} go so far as to construct and optimisation model for virtualised networks that incorporate attack graphs to give some guarantee of security.

Whilst the formulation in \cite{Matthews2020} utilises directed graphs, this paper uses un-directed simple graphs to represent the same data. This is due to the fact that it is not that important to consider directionality for the purposes of this analysis.

A `kill chain' is a multi-stage sequence of events that leads to the compromise of a network \cite{Yadav2015}. What is apparent is that many of the examples of kill-chains involve sequences of vulnerabilities, with the sequence dependant on the assets that are intersected between these vulnerabilities.

\subsection{Vulnerability Graphs}

Define a vulnerability graph as follows:
\begin{definition}
    A \emph{vulnerability graph} $G$ is a bipartite graph where one partition of vertices represents network hosts, and the other represents vulnerabilities. Let the edges of $G$ represent that a given host is affected by some detected vulnerability.
\end{definition}

This data structure is a very direct way of representing the results from industry standard vulnerability scanning and penetration test reports \cite{scott_2021}. This particular data structure lends itself to analysing the data in vulnerability reports much more efficiently. This is due to the fact that vulnerability reports are asset-first, \emph{i.e.} they are a list of hosts with sub-lists of what vulnerabilities affect them. However, many of the questions we wish to ask of this dataset are the other way round - we wish for a list of vulnerabilities with sub-lists of affected hosts. To search host-first data to extract all vulnerabilities, it is clear that all the host records must be read. This is computationally expensive, compared with a graph structure.

For this paper, we shall consider kill-chains as sub-sequences of paths through the vulnerability graph. For our purposes, we shall use the following definition:
\begin{definition}\label{def:killchain}
    A \emph{kill chain} is a sequence of vertices $K = \{ v_1, v_2, \ldots v_n \}$ from the vulnerability partition of a vulnerability graph $\mathcal{V}$ such that for each $v_i, v_j \in K$, there exists at least one host $h \in \mathcal{V}$ with $(v_i,h), (v_j,h) \in E(\mathcal{V})$. 
\end{definition}

Each part of a kill chain that comprises of movement from one vulnerability to the next will start on some vulnerability vertex, go to some host vertex, and then on to another vulnerability vertex connected to that same host. 

This paper shall only consider kill-chains that require vulnerabilities to be on the same host, not adjacent hosts on a given network subnet. However, it is very easy to consider adding nodes to the host partition of the vulnerability graph to encode these visibility relationships for vulnerabilities that are network facing. But to save space, the approach here will not consider this for now.

It is worth pointing out that the lack of any information coded about severity ratings for vulnerabilities, \emph{e.g.} CVSS scores. This information is not considered here as critical vulnerabilities should always be patched as soon as possible. The aim of these definitions is to consider non-critical paths to compromise as generally a subset of all paths through a vulnerability graph, and solve for these issues by analysing their connectivity. 

In short, this process aims to find the lower criticality issues that are widespread and well connected enough to cause more harm later.

It is important not to only consider only the vulnerabilities that we know to be `very bad'. This approach, in line with the quantum speedup we describe later, allows cybersecurity to consider an entire attack surface, not just potentially isolated pain points. This is the logical extension of the common maxim ``defence in depth" \cite{Groat2012}.

\subsection{Connectivity Dual Graphs and Vector Covers}\label{sec:dual}

We define the following dual graph construction for a vulnerability graph $\mathcal{V}$ with a partition of vulnerability vertices and host vertices:
\begin{definition}
    The dual $\mathcal{D}_{\mathcal{V}}$ is constructed as follows. For each vulnerability vertex $v_i \in V(\mathcal{V})$ for $1 \leq i \leq |V(\mathcal{V})|$:
    \begin{enumerate}
        \item Add $v_i$ to $V(\mathcal{D}_{\mathcal{V}})$ if $v_i \notin V(\mathcal{D}_{\mathcal{V}})$.
        \item Enumerate a list of host vertices $\{ h_0 , h_1 , \ldots \}$ connected to $v_i$.
        \item Iterating over this list of hosts, for each $v'_i$ connected to each host $h_j$:
        \begin{itemize}
            \item Add $v'_i$ to $V(\mathcal{D}_{\mathcal{V}})$.
            \item Add $(v_i, v'_i)$ to $E(\mathcal{D}_{\mathcal{V}})$
            \item If $(v_i, v'_i)$ already exists, add 1 to the weight of that edge.
        \end{itemize}
        \item Remove $v_i$ from $V(\mathcal{V})$ and continue with $v_{i+1}$.
    \end{enumerate}
\end{definition}

The dual $\mathcal{D}_{\mathcal{V}}$ represents all of the connections between vulnerabilities on a network. It is weighted, so that priority can be given to vulnerabilities that are more connected than others, by virtue of the weight coding the number of hosts a vulnerability was found on.

\subsection{Removing Kill-chains with Vertex Covers}

Previous approaches, notably \cite{Matthews2020}, have relied on locating cycles of probabilities in attack graphs. The idea in this paper is to utilise vulnerability graphs to consolidate all possible avenues for compromise, the kill-chains, and then remove all possible attack routes through a network. 

\begin{theorem}\label{thm:main}
    Removing the vertices in a vertex cover on $\mathcal{D}_{\mathcal{V}}$ from $V(\mathcal{V})$ will leave $\mathcal{V}$ totally disconnected on the vulnerability partition to itself via the host partition.
\end{theorem}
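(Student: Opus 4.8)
The plan is to argue by contradiction, reducing the statement to the defining property of a vertex cover together with the defining property of the dual construction. First I would make the conclusion precise: saying that $\mathcal{V}$ is ``totally disconnected on the vulnerability partition to itself via the host partition'' means that, after the removal, there is no path in $\mathcal{V}$ whose two endpoints are vulnerability vertices and which passes through at least one host vertex. Since $\mathcal{V}$ is bipartite, any such path alternates between the two partitions, so it contains a sub-path $v\,\text{--}\,h\,\text{--}\,v'$ with $v,v'$ distinct vulnerability vertices, $h$ a host, and $(v,h),(v',h)\in E(\mathcal{V})$. Hence it suffices to show that, after removing a vertex cover of $\mathcal{D}_{\mathcal{V}}$ from $V(\mathcal{V})$, no host of $\mathcal{V}$ is adjacent to two or more surviving vulnerability vertices; equivalently (Definition~\ref{def:killchain}), no kill chain involving two or more surviving vulnerabilities remains.

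Next I would record the key property of the dual construction: for distinct vulnerability vertices $v$ and $v'$ of $\mathcal{V}$, the edge $(v,v')$ lies in $E(\mathcal{D}_{\mathcal{V}})$ if and only if $v$ and $v'$ have a common host neighbour in $\mathcal{V}$. The ``if'' direction is exactly step~3 of the construction: when the earlier-indexed of the two (say $v_i$, with $i$ minimal) is processed, the shared host appears in its enumerated host list, $v'$ is among the vertices connected to that host, and so $(v_i,v')$ is added to $E(\mathcal{D}_{\mathcal{V}})$, with its weight incremented once for each such shared host. The ``only if'' direction holds because edges are created only in step~3, which only ever joins vertices that co-occur on some host. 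The one point demanding care is step~4, which deletes $v_i$ from the working copy of $\mathcal{V}$ before the next vulnerability is processed; I would note that this deletion cannot cause any shared-host pair to be missed, since each pair is already recorded when the earlier-indexed of its two vulnerabilities is handled — the deletion merely prevents the same edge (and weight) from being counted a second time.

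With these two observations the theorem is immediate. Let $C$ be a vertex cover of $\mathcal{D}_{\mathcal{V}}$ and delete the vertices of $C$ from $V(\mathcal{V})$. Suppose, for contradiction, that some host $h$ of $\mathcal{V}$ is still adjacent to two distinct surviving vulnerability vertices $v$ and $v'$. Then $v$ and $v'$ share the host $h$ in $\mathcal{V}$, so by the equivalence just established $(v,v')\in E(\mathcal{D}_{\mathcal{V}})$; since $C$ is a vertex cover of $\mathcal{D}_{\mathcal{V}}$, at least one of $v,v'$ lies in $C$ and was therefore deleted, contradicting that both survive. Hence every host of $\mathcal{V}$ is adjacent to at most one surviving vulnerability vertex, so no sub-path $v\,\text{--}\,h\,\text{--}\,v'$ between vulnerability vertices remains and, by the bipartite alternation noted above, no path through the host partition joins two surviving vulnerability vertices at all. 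I would also remark that minimality of the cover plays no role in this argument — any vertex cover suffices — and that minimality only matters for keeping the set of removed (patched) vulnerabilities as small as possible.

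I do not expect a serious obstacle here: once the conclusion is phrased in terms of shared-host pairs, the argument is a short contrapositive built directly on the vertex-cover property. The only place that genuinely needs attention is the bookkeeping in the dual construction — specifically the destructive step~4 — which I would dispatch with the remark in the second paragraph; the remaining subtlety is purely definitional, namely fixing exactly what ``disconnected via the host partition'' should mean, rather than mathematical.
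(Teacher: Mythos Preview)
Your argument is correct and follows essentially the same route as the paper: both reduce the claim to the observation that an edge of $\mathcal{D}_{\mathcal{V}}$ encodes exactly a $v$--$h$--$v'$ sub-path in $\mathcal{V}$, so a vertex cover of $\mathcal{D}_{\mathcal{V}}$ must meet every such sub-path and its removal leaves no vulnerability-to-vulnerability route through a host. Your version is simply more meticulous---you spell out the bipartite alternation, justify both directions of the dual--edge/shared--host correspondence (including the bookkeeping around step~4), and note that minimality is irrelevant---but the underlying idea is the same.
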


\begin{proof}
    Edges on the dual graph $\mathcal{D}_{\mathcal{V}}$ represent an `edge-host-edge' sub-path between one vulnerability and another on $\mathcal{V}$. The vertex cover on $\mathcal{D}_{\mathcal{V}}$ therefore intersects every `vulnerability-host-vulnerability' path on $\mathcal{V}$. 
    
    Let $v_i, v_j$ be vulnerabilities and $h_k$ a host in $V(\mathcal{V})$. Removing every host in the vertex cover of $\mathcal{D}_{\mathcal{V}}$ will result in a path of the form $\{(v_i,h_k),(h_k,v_j)\}$ being removed, and the sequence $\{v_i, h_k, v_j\}$ reduced to just one of $\{v_i, h_k\}$ or $\{h_k,v_j\}$. Therefore, when all the vertices are removed, $\mathcal{V}$ will be disconnected from the vulnerability side, as there is no way to get from any vulnerability to any other vulnerability.
\end{proof}

The point of this proof is to show that in order to remove all possible kill-chains, we do not need to resolve all vulnerabilities. 

\begin{corollary}
    A disconnected vulnerability graph can contain no kill-chains.
\end{corollary}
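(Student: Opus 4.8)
The plan is to obtain the corollary as an immediate consequence of Definition~\ref{def:killchain} together with the disconnection property established in Theorem~\ref{thm:main}. First I would make precise the sense of ``disconnected'' that is relevant here: a vulnerability graph $\mathcal{V}$ is disconnected in the required sense when there is no ``vulnerability--host--vulnerability'' path in it, \emph{i.e.}\ there are no two distinct vulnerability vertices $v_i, v_j$ and host vertex $h$ with $(v_i,h),(v_j,h) \in E(\mathcal{V})$. This is exactly the state that Theorem~\ref{thm:main} guarantees after removing a vertex cover of $\mathcal{D}_{\mathcal{V}}$ from $V(\mathcal{V})$, and it also holds \emph{a fortiori} for a graph with no edges at all.

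Next I would argue by contradiction. Suppose such a disconnected $\mathcal{V}$ nonetheless contained a kill-chain $K = \{ v_1, v_2, \ldots, v_n \}$ with $n \geq 2$. By Definition~\ref{def:killchain}, the two vulnerabilities $v_1, v_2 \in K$ would share a host $h \in \mathcal{V}$ with $(v_1, h), (v_2, h) \in E(\mathcal{V})$ --- precisely the vulnerability--host--vulnerability path excluded above. This contradiction shows that no kill-chain of length at least two can exist, which is the content of the corollary; note also that this is the converse direction to Theorem~\ref{thm:main}, so the two together say that a vulnerability graph is kill-chain-free exactly when it is disconnected on the vulnerability partition.

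\textbf{The main obstacle} is purely definitional rather than technical: the degenerate case $n \leq 1$. A one-element (or empty) sequence of vulnerabilities satisfies the condition in Definition~\ref{def:killchain} vacuously --- one may take $v_i = v_j$ --- yet it encodes no traversal of the network and hence no genuine path to compromise. I would therefore preface the proof with the convention that a kill-chain must contain at least two distinct vulnerabilities (equivalently, that the underlying sequence of edges in the sense of the path definition is non-trivial), after which the contradiction argument above closes the proof with no further computation required.
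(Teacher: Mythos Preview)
Your proposal is correct and follows essentially the same approach as the paper: both argue directly from Definition~\ref{def:killchain} that any kill-chain would require a vulnerability--host--vulnerability sub-path, which a disconnected vulnerability graph lacks. Your version is more carefully stated---you spell out the intended meaning of ``disconnected'', frame the argument as an explicit contradiction, and handle the degenerate $n\leq 1$ case---whereas the paper's proof is a one-sentence appeal to the same definition; but the underlying idea is identical.
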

\begin{proof}
    Consider our definition \ref{def:killchain}. By removing every `vulnerability-host-vulnerability' sub-path in a vulnerability graph $\mathcal{V}$ by means of a minimum vertex cover on $\mathcal{D}_{\mathcal{V}}$, we have removed every possible kill chain $K$ found in the paths of $\mathcal{V}$.
\end{proof}

Based on our assumption of a kill chain's structure, this approach will remove every kill chain. Furthermore, it becomes clear that we do not need to patch everything in order to have a significant impact and improvement. We can patch smarter, not harder. 

\subsection{Considerations}


It has been a mainstay for some time that many notions of security rely, in some significant sense, on the reaction times of those tasked with defending assets \cite{Schwartau2018}. Simply put; the shorter our detection and reaction times and the faster we iterate over security data with feedback from new information, the better our security posture will be.

The problem with considering attack graphs at enterprise scale is the sheer size of the datasets that could be involved. Thus, if we can minimise time taken to analyse large vulnerability datasets then we can improve enterprise reaction times to these issues being identified. 

By utilising an adiabatic quantum computing setup, this time should be shortened as we have solved an \emph{NP}-hard problem efficiently. By adding in other work by Pelofske \emph{et al} \cite{Pelofske2019} we can reduce the time required to find the most at-risk vulnerabilities and patch them with more priority. 

However, as pointed out in Di Tizio \emph{et al} \cite{DiTizio2022}, much of the activity in patch management may be relatively superfluous for any attack that is below the complexity of a nation state APT level threat actor.


%
%

\subsection{Weighted MVC}

We recall that in the construction process of the dual in section \ref{sec:dual}. As pointed out in \cite{Karp1972}, the weighted minimum vector cover (wMVC) problem is reducible to the non-weighted MVC. As shown in Pelofske \emph{et al} \cite{Pelofske2019}, these can also be performed on adiabatic quantum computers by means of coding the problem into a QUBO. 

Therefore, by applying a weighted MVC to our weighted dual $\mathcal{D}_{\mathcal{V}}$, we can improve our output by being able to prioritise more highly connected vulnerabilities over less well connected ones. Although this does not change any of the proofs above, it is a significant improvement our analysis. 

\section{A Worked Example}

We present the following vulnerability graph $\mathcal{V}$, with hosts $a$ to $g$, and vulnerabilities $1$ through $8$:

\begin{center}
\scalebox{0.51}{\begin{tikzpicture}
[nodeDecorate/.style={inner sep=2pt,draw,thick,minimum size=0.4cm},%
  lineDecorate/.style={-,thick}]
\foreach \nodename/\x/\y/\c/\s in {
  1/0/0/blue/circle, 2/2/0/blue/circle, 3/4/0/blue/circle, 4/6/0/blue/circle, 5/8/0/blue/circle, 6/10/0/blue/circle, 7/12/0/blue/circle, 8/14/0/blue/circle,
  a/1/2.5/red/rectangle, b/3/2.5/red/rectangle, c/5/2.5/red/rectangle, d/7/2.5/red/rectangle, e/9/2.5/red/rectangle, f/11/2.5/red/rectangle, g/13/2.5/red/rectangle}
{
  \node (\nodename) at (\x,\y) [nodeDecorate] [label=left: \nodename,fill=\c,shape=\s] {};
}
\path
\foreach \startnode/\endnode in {
  1/a, 1/b, 1/d, 1/f,
  2/a, 2/b,
  3/a, 3/d, 3/e,
  4/b, 4/c, 4/f,
  5/g, 6/f, 6/g,
  7/b, 7/c, 7/f,
  8/c, 8/d, 8/g}
{
  (\startnode) edge[lineDecorate] node {} (\endnode)
};
\end{tikzpicture}}
\end{center}

The host connections are given by the following vulnerability to host edges, summarised by:

\begin{itemize}
    \item $1$ --- $\{ a, b, d, f \}$
    \item $2$ --- $\{ a, b \}$
    \item $3$ --- $\{ a, d, e \}$
    \item $4$ --- $\{ b, c, f \}$
    \item $5$ --- $\{ g \}$
    \item $6$ --- $\{ f, g \}$
    \item $7$ --- $\{ b, c, f \}$
    \item $8$ --- $\{ c, d, g \}$
\end{itemize}
From this graph we can get the following dual graph $\mathcal{D}_{\mathcal{V}}$:

\begin{center}
\begin{tikzpicture}
[nodeDecorate/.style={inner sep=2pt,draw,thick,minimum size=0.5cm,shape=circle},%
  lineDecorate/.style={-,thick}]
\foreach \name/\x/\y in {
1/2/0, 2/4/0, 3/6/2, 4/6/4, 5/4/6, 6/2/6, 7/0/4, 8/0/2
}
{
  \node (\name) at (\x,\y) [nodeDecorate] {$\name$};
}
\path
\foreach \startnode/\endnode in {
  1/2,1/3,1/4,1/7,1/8,1/6,
  2/3,2/4,2/7,
  3/8,
  4/6,4/7,4/8,
  5/6,5/8,
  6/7,6/8,
  7/8}
{
  (\startnode) edge[lineDecorate] node {} (\endnode)
};

\end{tikzpicture}
\end{center}

If we compute the MVC we get the set $\{ 1, 2, 4, 8, 6 \}$. Removing these nodes from our original vulnerability graph $\mathcal{V}$ we get: 

\begin{itemize}
    \item $3$ --- $\{ a, d, e \}$
    \item $5$ --- $\{ g \}$
    \item $7$ --- $\{ b, c, f \}$
\end{itemize}

Or as a diagram:
\begin{center}
\scalebox{0.58}{\begin{tikzpicture}
[nodeDecorate/.style={shape=circle,inner sep=2pt,draw,thick,minimum size=0.4cm},%
  lineDecorate/.style={-,thick}]
\foreach \nodename/\x/\y/\c/\s in {
  3/8/0/blue/circle, 5/12/0/blue/circle,  7/4/0/blue/circle,
  a/0/2.5/red/rectangle, b/2/2.5/red/rectangle, c/4/2.5/red/rectangle, d/6/2.5/red/rectangle, e/8/2.5/red/rectangle, f/10/2.5/red/rectangle, g/12/2.5/red/rectangle}
{
  \node (\nodename) at (\x,\y) [nodeDecorate] [label=left: \nodename,fill=\c,shape=\s] {};
}
\path
\foreach \startnode/\endnode in {
  3/a, 3/d, 3/e, 5/g, 7/b, 7/c, 7/f}
{
  (\startnode) edge[lineDecorate] node {} (\endnode)
};
\end{tikzpicture}}
\end{center}
This demonstrates that our algorithm and theorem is correct, as well as illustrating the density which can arise in vulnerability graphs and the duals we defined for them. 

\section{Solving with Quantum Hardware}

We now demonstrate our approach using a DWave AQC system.

\subsection{Implementation}

To test the above process on quantum hardware, it was decided to make use of the built in functions within DWave's implementation of the \verb'networkx' python library for graph programming. The back-end that was used was the `Advantage Solver 4.1' QPU system.

For benchmarking, the same algorithms were run through the accompanying \verb'ExactSolver()' for QUBO's that solves using classical methods. We implemented our own solver, based on DWave's provided code, for finding independent sets with DWave's python interface. This gave us more control over the generation of solutions and checking. The code is available on our github: see \cite{CarneyGH2022}.

To explore the hardware capabilities, python code was written to perform 3 solves on the duals automatically generated for increasingly larger random bipartite graphs with varying edge probabilities, timing the results on both classical and quantum solvers \cite{CarneyGH2022}. The given MVC candidates are then checked that all `vuln-host-vuln' paths have been eliminated.

\begin{table*}[t]
\centering
\begin{tabular}{l|rr|}
\cline{2-3}
& \multicolumn{2}{c|}{\cellcolor[HTML]{EFEFEF}\textbf{\begin{tabular}[c]{@{}c@{}} \hline Number of vulnerabilities\\ Disclosed to CVE Databases\end{tabular}}} \\ \hline
\rowcolor[HTML]{EFEFEF} 
\multicolumn{1}{|l|}{\cellcolor[HTML]{EFEFEF}\textbf{Software Packages}} & \multicolumn{1}{r|}{\cellcolor[HTML]{EFEFEF}\textbf{2020}}                                    & \textbf{All Time (as of 2021)}                                    \\ \hline
\multicolumn{1}{|l|}{Windows 10}                                         & \multicolumn{1}{r|}{807}                  & 2990 \\ \hline
\multicolumn{1}{|l|}{Windows Server 2016}                                & \multicolumn{1}{r|}{794}                  & 2764 \\ \hline
\multicolumn{1}{|l|}{Linux Kernel (all versions)}                        & \multicolumn{1}{r|}{126}                  & 3000 \\ \hline
\multicolumn{1}{|l|}{Debian Linux}                                       & \multicolumn{1}{r|}{946}                  & 7331 \\ \hline
\multicolumn{1}{|l|}{Ubuntu Linux}                                       & \multicolumn{1}{r|}{483}                  & 3680 \\ \hline
\multicolumn{1}{|l|}{Mac OSX (all versions)}                             & \multicolumn{1}{r|}{314}                  & 3100 \\ \hline
\multicolumn{1}{|l|}{Android (all versions)}                             & \multicolumn{1}{r|}{859}                  & 4707 \\ \hline
\multicolumn{1}{|l|}{iOS (for iPhone)}                                   & \multicolumn{1}{r|}{322}                  & 2820 \\ \hline
\multicolumn{1}{|l|}{Chrome Browser}                                     & \multicolumn{1}{r|}{227}                  & 2554 \\ \hline
\multicolumn{1}{|l|}{Firefox Browser}                                    & \multicolumn{1}{r|}{141}                  & 1993 \\ \hline
\multicolumn{1}{|l|}{Microsoft Office Suite}                             & \multicolumn{1}{r|}{71}                   & 727  \\ \hline
\end{tabular}
\caption{This table shows the number of disclosed vulnerabilities to the CVE database \cite{Ozkan22} for some select popular software packages, for 2020 specifically and `All Time'.}
\label{tab:vulnspersys}
\end{table*}

\subsection{Limitations and Real World Patching Estimates}

Owing to the lack of publicly available vulnerability data from scans that consisted of more than a handful of hosts, it was not possible to validate this approach using real world data.

However, one can estimate the possible occurrence of patches for a given host based on the number of vulnerabilities disclosed for major software operating systems, components, and suites. We use \cite{Ozkan22} as our primary resource for this data, a summary of some examples is in table \ref{tab:vulnspersys}.

For any vulnerability to enter the CVE database \cite{Ozkan22}, it is usually accompanied by some kind of coordinated disclosure, for which a patch is generally also made available. As such, the number of publicly disclosed vulnerabilities should correlate reasonably accurately to the number of available patches for that software. 

Considering our chart in table \ref{tab:vulnspersys} we find that anywhere from 7-27\% of all vulnerabilities for a given software package were disclosed in 2020 alone. We can estimate the average annual patch inflation rate (that is, growth in the number of patches year on year) to be somewhere between 5-15\%.

Although there is a lack of any hard evidence for how many clients and servers any given enterprise commissions, maintains, or decomissions in any year, we can estimate from our annual patch inflation rate that for a given 5 year period the number of patches will grow between 27\% and 100\% for our patch inflation range. 

If we assume largely homogeneous networks - \emph{i.e.} that most hosts are derived from a small number of `gold builds' to facilitate the recommendations in \cite{NIST2022} - then the number of vulnerabilities  will be relatively small compared to the overall number of hosts. 

Combining all of these estimates, we find that with an average (from \cite{Ozkan22}) of $\approx 3000$ vulnerabilities per major software package, an enterprise with 10 dominant software packages will accrue anywhere up to 8,000 new vulnerabilities in 5 years with a patch inflation rate of 5\%, each package potentially adding $\approx 800$ vulnerabilities to the CVE database. 

For an enterprise with 100,000 hosts (clients and servers) this would give a minimal guesstimate of $\approx 8\%$ probability that any vulnerability is connected to any given host, assuming an even distribution of software. This is the estimate that we will use later in our experimentation with a DWave AQC, but should be considered somewhat of an upper bound.

Clearly time really is of the essence when applying patches, and so any process to speed this activity up has significant potential utility given the potential growth rates involved.

\subsection{Results and Analysis}\label{sec:results}

\begin{figure*}[t]
    \centering
    \begin{subfigure}[b]{0.475\textwidth}
        \centering
        \includegraphics[width=\textwidth]{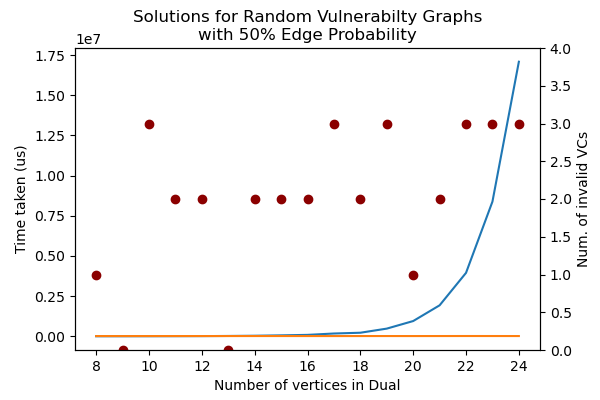}
    \end{subfigure}
    \hfill
    \begin{subfigure}[b]{0.475\textwidth}
        \centering
        \includegraphics[width=\textwidth]{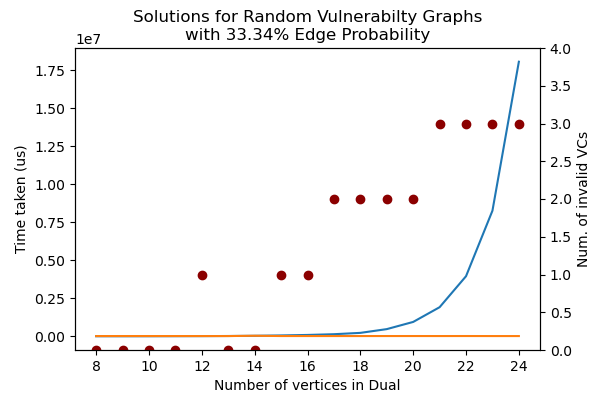}
    \end{subfigure}
    \hfill
    \begin{subfigure}[b]{0.475\textwidth}
        \centering
        \includegraphics[width=\textwidth]{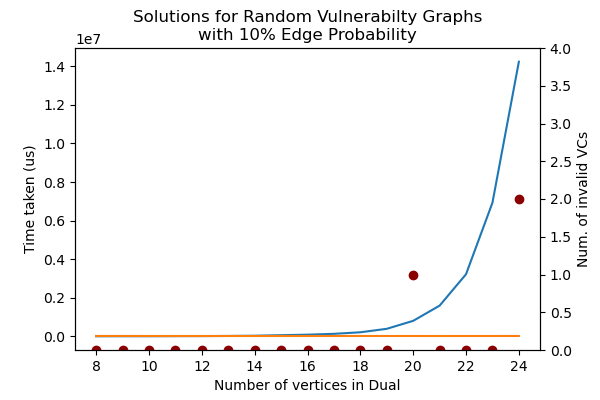}
    \end{subfigure}
    \hfill
    \begin{subfigure}[b]{0.475\textwidth}
        \centering
        \includegraphics[width=\textwidth]{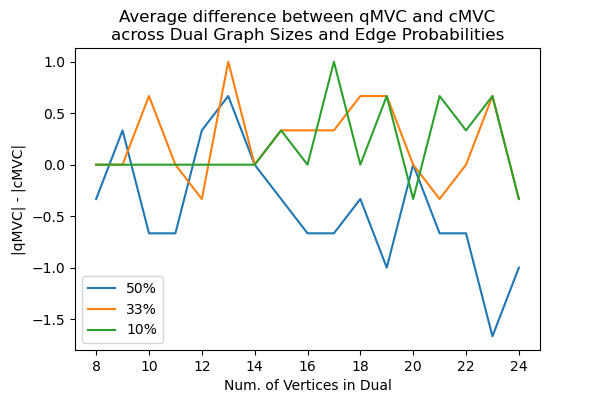}
    \end{subfigure}
    \hfill
    \caption{The results of running the outlined procedure in section \ref{sec:dual} on three random graphs of varying edge probabilities, ranging in sizes 16 to 48 vertices (8 to 24 vulnerabilities). Edge probabilities determine the likelihood that any two vertices have an edge between them. Times for the quantum annealer are shown by the orange line, with the blue line indicating the time taken to find the MVC clasically, in microseconds (1e7 scale). The red dots indicate the number of invalid vector cover solutions found by the DWave annealer. The lower right graph shows the averaged difference in sizes between the classical method for finding MVCs (denoted cMVC), and the DWave AQC (deonted qMVC).}
    \label{fig:results}
\end{figure*}

\begin{table*}[t]
\resizebox{\textwidth}{!}{\begin{tabular}{l|llll|llll|llll|}
\cline{2-13}
  & \multicolumn{4}{c|}{\textbf{50\% Edge Probability}}
  & \multicolumn{4}{c|}{\textbf{33.34\% Edge Probability}}
  & \multicolumn{4}{c|}{\textbf{10\% Edge Probability}}                   \\ \hline
\multicolumn{1}{|l|}{\textbf{Num. Vulns}} & \multicolumn{1}{l|}{\begin{tabular}[c]{@{}l@{}}Mean Diff of\\ VC sizes\end{tabular}} & \multicolumn{1}{l|}{cMVC ($\mu s$)} & \multicolumn{1}{l|}{qMVC ($\mu s$)} & \begin{tabular}[c]{@{}l@{}}Invalid \\ VCs\end{tabular} & \multicolumn{1}{l|}{\begin{tabular}[c]{@{}l@{}}Mean Diff of\\ VC sizes\end{tabular}} & \multicolumn{1}{l|}{cMVC ($\mu s$)} & \multicolumn{1}{l|}{qMVC ($\mu s$)} & \begin{tabular}[c]{@{}l@{}}Invalid \\ VCs\end{tabular} & \multicolumn{1}{l|}{\begin{tabular}[c]{@{}l@{}}Mean Diff of\\ VC sizes\end{tabular}} & \multicolumn{1}{l|}{cMVC ($\mu s$)} & \multicolumn{1}{l|}{qMVC ($\mu s$)} & \begin{tabular}[c]{@{}l@{}}Invalid \\ VCs\end{tabular} \\ \hline
\multicolumn{1}{|l|}{\textbf{8}}          & \multicolumn{1}{l|}{-0.333}      & \multicolumn{1}{l|}{2847}           & \multicolumn{1}{l|}{16435}          & 1  & \multicolumn{1}{l|}{0}           & \multicolumn{1}{l|}{3677}           & \multicolumn{1}{l|}{16435}          & 0  & \multicolumn{1}{l|}{0}           & \multicolumn{1}{l|}{1911}           & \multicolumn{1}{l|}{16982}          & 0  \\ \hline
\multicolumn{1}{|l|}{\textbf{9}}          & \multicolumn{1}{l|}{0.333}       & \multicolumn{1}{l|}{3450}           & \multicolumn{1}{l|}{15579}          & 0  & \multicolumn{1}{l|}{0}           & \multicolumn{1}{l|}{2485}           & \multicolumn{1}{l|}{17851}          & 0  & \multicolumn{1}{l|}{0}           & \multicolumn{1}{l|}{2734}           & \multicolumn{1}{l|}{17395}          & 0  \\ \hline
\multicolumn{1}{|l|}{\textbf{10}}         & \multicolumn{1}{l|}{-0.667}      & \multicolumn{1}{l|}{4876}           & \multicolumn{1}{l|}{16451}          & 3  & \multicolumn{1}{l|}{0.667}       & \multicolumn{1}{l|}{3812}           & \multicolumn{1}{l|}{17547}          & 0  & \multicolumn{1}{l|}{0}           & \multicolumn{1}{l|}{3385}           & \multicolumn{1}{l|}{16150}          & 0  \\ \hline
\multicolumn{1}{|l|}{\textbf{11}}         & \multicolumn{1}{l|}{-0.667}      & \multicolumn{1}{l|}{8156}           & \multicolumn{1}{l|}{17095}          & 2  & \multicolumn{1}{l|}{0}           & \multicolumn{1}{l|}{7738}           & \multicolumn{1}{l|}{17464}          & 0  & \multicolumn{1}{l|}{0}           & \multicolumn{1}{l|}{8022}           & \multicolumn{1}{l|}{15646}          & 0  \\ \hline
\multicolumn{1}{|l|}{\textbf{12}}         & \multicolumn{1}{l|}{0.333}       & \multicolumn{1}{l|}{12256}          & \multicolumn{1}{l|}{16765}          & 2  & \multicolumn{1}{l|}{-0.333}      & \multicolumn{1}{l|}{11649}          & \multicolumn{1}{l|}{16582}          & 1  & \multicolumn{1}{l|}{0}           & \multicolumn{1}{l|}{11038}          & \multicolumn{1}{l|}{16214}          & 0  \\ \hline
\multicolumn{1}{|l|}{\textbf{13}}         & \multicolumn{1}{l|}{0.667}       & \multicolumn{1}{l|}{22217}          & \multicolumn{1}{l|}{16413}          & 0  & \multicolumn{1}{l|}{1}           & \multicolumn{1}{l|}{20335}          & \multicolumn{1}{l|}{16505}          & 0  & \multicolumn{1}{l|}{0}           & \multicolumn{1}{l|}{18499}          & \multicolumn{1}{l|}{16114}          & 0  \\ \hline
\multicolumn{1}{|l|}{\textbf{14}}         & \multicolumn{1}{l|}{0}           & \multicolumn{1}{l|}{36199}          & \multicolumn{1}{l|}{17154}          & 2  & \multicolumn{1}{l|}{0}           & \multicolumn{1}{l|}{33898}          & \multicolumn{1}{l|}{36734}          & 0  & \multicolumn{1}{l|}{0}           & \multicolumn{1}{l|}{29909}          & \multicolumn{1}{l|}{16129}          & 0  \\ \hline
\multicolumn{1}{|l|}{\textbf{15}}         & \multicolumn{1}{l|}{-0.333}      & \multicolumn{1}{l|}{60240}          & \multicolumn{1}{l|}{17238}          & 2  & \multicolumn{1}{l|}{0.333}       & \multicolumn{1}{l|}{57864}          & \multicolumn{1}{l|}{17940}          & 1  & \multicolumn{1}{l|}{0.333}       & \multicolumn{1}{l|}{57489}          & \multicolumn{1}{l|}{16195}          & 0  \\ \hline
\multicolumn{1}{|l|}{\textbf{16}}         & \multicolumn{1}{l|}{-0.667}      & \multicolumn{1}{l|}{89672}          & \multicolumn{1}{l|}{16380}          & 2  & \multicolumn{1}{l|}{0.333}       & \multicolumn{1}{l|}{89678}          & \multicolumn{1}{l|}{17403}          & 1  & \multicolumn{1}{l|}{0}           & \multicolumn{1}{l|}{85638}          & \multicolumn{1}{l|}{16473}          & 0  \\ \hline
\multicolumn{1}{|l|}{\textbf{17}}         & \multicolumn{1}{l|}{-0.667}      & \multicolumn{1}{l|}{174828}         & \multicolumn{1}{l|}{15852}          & 3  & \multicolumn{1}{l|}{0.333}       & \multicolumn{1}{l|}{137885}         & \multicolumn{1}{l|}{17143}          & 2  & \multicolumn{1}{l|}{1}           & \multicolumn{1}{l|}{128818}         & \multicolumn{1}{l|}{16615}          & 0  \\ \hline
\multicolumn{1}{|l|}{\textbf{18}}         & \multicolumn{1}{l|}{-0.333}      & \multicolumn{1}{l|}{222740}         & \multicolumn{1}{l|}{16653}          & 2  & \multicolumn{1}{l|}{0.667}       & \multicolumn{1}{l|}{225831}         & \multicolumn{1}{l|}{16651}          & 2  & \multicolumn{1}{l|}{0}           & \multicolumn{1}{l|}{208862}         & \multicolumn{1}{l|}{16617}          & 0  \\ \hline
\multicolumn{1}{|l|}{\textbf{19}}         & \multicolumn{1}{l|}{-1}          & \multicolumn{1}{l|}{479130}         & \multicolumn{1}{l|}{17296}          & 3  & \multicolumn{1}{l|}{0.667}       & \multicolumn{1}{l|}{472617}         & \multicolumn{1}{l|}{17702}          & 2  & \multicolumn{1}{l|}{0.667}       & \multicolumn{1}{l|}{388965}         & \multicolumn{1}{l|}{16940}          & 0  \\ \hline
\multicolumn{1}{|l|}{\textbf{20}}         & \multicolumn{1}{l|}{0}           & \multicolumn{1}{l|}{950424}         & \multicolumn{1}{l|}{16194}          & 1  & \multicolumn{1}{l|}{0}           & \multicolumn{1}{l|}{945096}         & \multicolumn{1}{l|}{18166}          & 2  & \multicolumn{1}{l|}{-0.333}      & \multicolumn{1}{l|}{802647}         & \multicolumn{1}{l|}{16615}          & 1  \\ \hline
\multicolumn{1}{|l|}{\textbf{21}}         & \multicolumn{1}{l|}{-0.667}      & \multicolumn{1}{l|}{1928877}        & \multicolumn{1}{l|}{17399}          & 2  & \multicolumn{1}{l|}{-0.333}      & \multicolumn{1}{l|}{1912908}        & \multicolumn{1}{l|}{17339}          & 3  & \multicolumn{1}{l|}{0.667}       & \multicolumn{1}{l|}{1590244}        & \multicolumn{1}{l|}{16001}          & 0  \\ \hline
\multicolumn{1}{|l|}{\textbf{22}}         & \multicolumn{1}{l|}{-0.667}      & \multicolumn{1}{l|}{3949102}        & \multicolumn{1}{l|}{18930}          & 3  & \multicolumn{1}{l|}{0}           & \multicolumn{1}{l|}{3950946}        & \multicolumn{1}{l|}{18667}          & 3  & \multicolumn{1}{l|}{0.333}       & \multicolumn{1}{l|}{3219998}        & \multicolumn{1}{l|}{16636}          & 0  \\ \hline
\multicolumn{1}{|l|}{\textbf{23}}         & \multicolumn{1}{l|}{-1.667}      & \multicolumn{1}{l|}{8381824}        & \multicolumn{1}{l|}{18085}          & 3  & \multicolumn{1}{l|}{0.667}       & \multicolumn{1}{l|}{8257182}        & \multicolumn{1}{l|}{16495}          & 3  & \multicolumn{1}{l|}{0.667}       & \multicolumn{1}{l|}{6928612}        & \multicolumn{1}{l|}{16599}          & 0  \\ \hline
\multicolumn{1}{|l|}{\textbf{24}}         & \multicolumn{1}{l|}{-1}          & \multicolumn{1}{l|}{17086013}       & \multicolumn{1}{l|}{16866}          & 3  & \multicolumn{1}{l|}{-0.333}      & \multicolumn{1}{l|}{18049568}       & \multicolumn{1}{l|}{16306}          & 3  & \multicolumn{1}{l|}{-0.333}      & \multicolumn{1}{l|}{14240710}       & \multicolumn{1}{l|}{17202}          & 2  \\ \hline
\end{tabular}}

\caption{\label{tab:table-name}This table shows the results for comparing classical and quantum efforts undertaken to find the MVCs. Times are given in microseconds ($\mu s$). The mean difference in size between classical and quantum is gieven by $|qMVC| - |cMVC|$. Also provided are the number of vector covers generated by the DWave AQC that were found to be invalid - \emph{i.e.} they did not provide valid vector covers that solve the problem.}
\label{tab:results}
\end{table*}

We see in figure \ref{fig:results} the results of our process. The benchmark testing was done on a Mac M1 with 8Gb RAM vs. a DWave QPU. The quantitative results are found in table \ref{tab:results}.

What can be observed is that whilst the classical algorithm for finding vector covers increases in time taken proportional exponentially to the size of the graph, in line with what one would expect for an \emph{NP}-hard problem, the quantum algorithm processing remains essentially constant, increasing only slightly in line with the number of variables presented in the QUBO - a number proportional to the number of edges in our dual graphs $\mathcal{D}_{\mathcal{V}}$, and so proportional secondarily to the size and density of the starting bipartite graphs.

What should be noted here is that as the randomly generated graphs move away from the worst-case edge density (that of around one half) the improvements in fidelity on the DWave system improves significantly. In fact, the DWave is producing comparable Vector Cover results for large graphs with 10\% edge density, but in almost strictly static time. Indeed, the inflection point for when our problem test set takes longer to compute classically seems to be around 12 vertices, although this will likely change with larger and larger graphs.

For comparison, the MVC solution on the Dual graph for the random vulnerability graphs with edge probability 50\% generated a QUBO problem with 78 variables when sent to the DWave system. For the 10\% edge probability vulnerability graphs the QUBO problems had just 33 variables.

The difference in the classical exact and quantum vector cover solution set sizes tells us how close the DWave system came to the proper solution. When looking at the $\Delta$ for each of the values, we can observe; For 50\% edge probability graphs, the DWave was missing vertices, giving a negative $\Delta$ value. Whilst for 33.34\% and 10\% edge probability graphs, the values were on average higher, meaning that the DWave solution had too many vertices.

Even though the DWave provided, on average, too many vertices in a vector cover for it to be \emph{minimal}, this is in line with classical approximation methods \cite{williamson2011}. It should be pointed out that in these results there only seems to be at most 1 extra vertex being provided, which is not a significant overhead.

As such, what we are seeing is that the quantum solver is at least bounded above by the classical solver's time, and would appear to be exponentially faster at solving this problem type at the larger node counts but with edge densities we expect to see in the real world, than its classical counterpart. 

\section{Conclusions}

In conclusion, this work has found that there is a potential viable use for quantum computing in the field of vulnerability analysis, specifically the prioritisation of patches by this method. It shows some demonstrable promise of workable quantum advantage for the vulnerability graph densities that we expect to see in the real world.

The theoretical work presented here demonstrates how all possible kill-chains can be remediated through careful analysis of the connectivity of vulnerabilities. The experimental data matches the theoretical work, ultimately showing that these problems are made tractable by quantum hardware. 

The proof for theorem \ref{thm:main} shows that existing best practice controls, such as network segregation and zero trust \cite{Groat2012}, are valid and useful in this model. Indeed, this model validates that possible kill-chains are limited from a vulnerability chaining point of view by the connectivity reduction presented by these controls.

\subsection{Further Work}

To facilitate future development of these ideas, we have included all of the code for performing the experiments and data analysis in this paper on the Quantum Village Github \cite{CarneyGH2022}.

There are likely other parallelization options for solving these problems with GPUs that were not explored here at all. There is work, \emph{e.g.} in \cite{Zhong2017}, that indicates this is a feasible route for future comparison.

The most obvious next step in this work is to remedy the lack of real data to operate on. However have estimated by considering the homogeneity of many modern enterprises \cite{Zhang2022} we might estimate that there are comparatively few vulnerabilities compared to number of hosts - that is, most vulnerabilities affect most hosts on a highly host-homogeneous network. How this affects the process presented here would be interesting to measure.

To develop this, it should be possible to add more layers to our bipartite graph, making it $n$-partite for $n$ data sources we wish to consider. Kill-chains here would be informed by vulnerability data, host data, host-to-host connectivity data (captured in, say, a subnet layer), threat intelligence data, and more. In a significantly more complex vulnerability graph like this, it might just be found that the quantum computing processing baseline witnessed here provides even more of an advantage. 

There is also the question of prioritising the vulnerabilities identified through this MVC method - although, the weighted case and/or a subsequent analysis of centrality (\emph{e.g.} degree centrality or betweenness centrality, see \cite{Diestel2010}) will likely provide these. 

\bibliographystyle{plain}
\bibliography{bib}

\begin{thebibliography}{10}

\bibitem{Callahan2019}
Devon Callahan, Timothy Curry, Hazel Davidson, Heytem Zitoun, Benjamin Fuller,
  and Laurent Michel.
\newblock Fashion: Functional and attack graph secured hybrid optimization of
  virtualized networks, 2019.

\bibitem{CarneyGH2022}
Mark Carney.
\newblock {Quantum Vulnerability Analysis Notebooks}.
\newblock \url{https://github.com/Quantum-Village/Quantum-Vuln-Analysis}, 11
  2022.

\bibitem{Chio2018}
Clarence Chio and David Freeman.
\newblock {\em Machine Learning and Security}.
\newblock O'Reilly Media, Sebastopol, CA, February 2018.

\bibitem{Diestel2010}
Reinhard Diestel.
\newblock {\em Graph Theory}.
\newblock Graduate texts in mathematics. Springer, Berlin, Germany, 4 edition,
  July 2010.

\bibitem{Farhang2019}
Sadegh Farhang, Mehmet~Bahadir Kirdan, Aron Laszka, and Jens Grossklags.
\newblock Hey google, what exactly do your security patches tell us? a
  large-scale empirical study on android patched vulnerabilities, 2019.

\bibitem{FX2016}
Aguessy Fran\c{c}ois-Xavier, Bettan Olivier, Blanc Grégory, Conan Vania, and
  Debar Hervé.
\newblock Bayesian attack model for dynamic risk assessment, 2016.

\bibitem{Groat2012}
Stephen Groat, Joseph Tront, and Randy Marchany.
\newblock Advancing the defense in depth model.
\newblock In {\em 2012 7th International Conference on System of Systems
  Engineering (SoSE)}, pages 285--290, 2012.

\bibitem{Karp1972}
Richard~M. Karp.
\newblock Reducibility among combinatorial problems.
\newblock In {\em Complexity of Computer Computations}, pages 85--103. Springer
  {US}, 1972.

\bibitem{Kochenberger2014}
Gary Kochenberger, Jin-Kao Hao, Fred Glover, Mark Lewis, Zhipeng L\"{u}, Haibo
  Wang, and Yang Wang.
\newblock The unconstrained binary quadratic programming problem: a survey.
\newblock {\em Journal of Combinatorial Optimization}, 28(1):58--81, April
  2014.

\bibitem{Lucas2014}
Andrew Lucas.
\newblock Ising formulations of many {NP} problems.
\newblock {\em Frontiers in Physics}, 2, 2014.

\bibitem{Matthews2020}
Isaac Matthews, John Mace, Sadegh Soudjani, and Aad van Moorsel.
\newblock Cyclic bayesian attack graphs: A systematic computational approach.
\newblock In {\em 2020 {IEEE} 19th International Conference on Trust, Security
  and Privacy in Computing and Communications ({TrustCom})}. {IEEE}, December
  2020.

\bibitem{newman2017}
Lily~Hay Newman.
\newblock Equifax officially has no excuse, Sep 2017.

\bibitem{Nielsen2010}
Michael~A Nielsen and Isaac~L Chuang.
\newblock {\em Quantum Computation and Quantum Information}.
\newblock Cambridge University Press, Cambridge, England, December 2010.

\bibitem{Ozkan22}
Serkan \"{O}zkan.
\newblock Top 50 products by total number of "distinct" vulnerabilities.
\newblock \url{https://www.cvedetails.com/top-50-products.php}, 2022.

\bibitem{Pelofske2019}
Elijah Pelofske, Georg Hahn, and Hristo Djidjev.
\newblock Solving large minimum vertex cover problems on a quantum annealer.
\newblock In {\em Proceedings of the 16th {ACM} International Conference on
  Computing Frontiers}. {ACM}, April 2019.

\bibitem{Schwartau2018}
Winn Schwartau.
\newblock {\em Analogue Network Security}.
\newblock Schwartauhaus, 04 2018.

\bibitem{scott_2021}
Jim Scott.
\newblock Cybersecurity is a data problem.
\newblock
  \url{https://www.dbta.com/BigDataQuarterly/Articles/Cybersecurity-Is-a-Data-Problem-149189.aspx},
  Sep 2021.

\bibitem{DeSimone1995}
C.~De Simone, M.~Diehl, M.~J\"{u}nger, P.~Mutzel, G.~Reinelt, and G.~Rinaldi.
\newblock Exact ground states of ising spin glasses: New experimental results
  with a branch-and-cut algorithm.
\newblock {\em Journal of Statistical Physics}, 80(1-2):487--496, July 1995.

\bibitem{NIST2022}
Murugiah Souppaya.
\newblock Guide to enterprise patch management planning: Preventive maintenance
  for technology.
\newblock Technical report, National Institute of Standards and Technology,
  2022.

\bibitem{DiTizio2022}
Giorgio~Di Tizio, Michele Armellini, and Fabio Massacci.
\newblock Software updates strategies: a quantitative evaluation against
  advanced persistent threats.
\newblock {\em {IEEE} Transactions on Software Engineering}, pages 1--1, 2022.

\bibitem{Wang2021}
Xinda Wang, Shu Wang, Pengbin Feng, Kun Sun, Sushil Jajodia, Sanae Benchaaboun,
  and Frank Geck.
\newblock {PatchRNN}: A deep learning-based system for security patch
  identification.
\newblock In {\em {MILCOM} 2021 - 2021 {IEEE} Military Communications
  Conference ({MILCOM})}. {IEEE}, November 2021.

\bibitem{williamson2011}
David~P. Williamson and David~Bernard Shmoys.
\newblock {\em The design of approximation algorithms}.
\newblock Cambridge University Press, New York, 2011.
\newblock OCLC: ocn671709856.

\bibitem{Yadav2015}
Tarun Yadav and Arvind~Mallari Rao.
\newblock Technical aspects of cyber kill chain.
\newblock In {\em Communications in Computer and Information Science}, pages
  438--452. Springer International Publishing, 2015.

\bibitem{Zhang2022}
Qisheng Zhang, Abdullah~Zubair Mohammed, Zelin Wan, Jin-Hee Cho, and
  Terrence~J. Moore.
\newblock Diversity-by-design for dependable and secure cyber-physical systems:
  A survey.
\newblock {\em IEEE Transactions on Network and Service Management},
  19(1):706--728, 2022.

\bibitem{Zhong2017}
Wenyong Zhong, Jianhua Sun, Hao Chen, Jun Xiao, Zhiwen Chen, Chang Cheng, and
  Xuanhua Shi.
\newblock Optimizing graph processing on gpus.
\newblock {\em IEEE Transactions on Parallel and Distributed Systems},
  28(4):1149--1162, 2017.

\end{thebibliography}

\end{document}